\newtheoremstyle{note}
  {\topsep}               
  {\topsep}               
  {}                      
  {\parindent}            
  {\itshape}              
  {.}                     
  {5pt plus 1pt minus 1pt}
  {}
\theoremstyle{note}
\newtheorem{theorem}{Theorem}
\newtheorem{lemma}{Lemma}
\newtheorem{conjecture}{Conjecture}
\theoremstyle{definition}
\theoremstyle{remark}
\newcommand{\tr}{\operatorname{tr}}
\newcommand{\Tr}{\operatorname{Tr}}
\newcommand{\be}{\begin{equation}}
\newcommand{\ee}{\end{equation}}
\newcommand{\ba}{\begin{align}}
\newcommand{\ea}{\end{align}}
\def\<{\langle}  
\def\>{\rangle}  
\newcommand{\dket}[1]{| #1\>\!\>}
\newcommand{\dbra}[1]{\<\!\< #1|}
\newcommand{\dinner}[2]{\<\!\< #1| #2\>\!\>}
\newcommand{\douter}[2]{| #1\>\!\>\<\!\< #2|}
\newcommand{\mse}{\mathcal{E}}
\newcommand{\barcal}[1]{\bar{\mathcal{#1}}}
\def\eqref#1{\textup{(\ref{#1})}}  
\newcommand{\eref}[1]{Eq.~\textup{(\ref{#1})}}
\newcommand{\esref}[1]{Eqs.~\textup{(\ref{#1})}}
\newcommand{\sref}[1]{Sec.~\ref{#1}}
\newcommand{\Sref}[1]{Section~\ref{#1}}
\newcommand{\thref}[1]{Theorem~\ref{#1}}
\newcommand{\Thsref}[1]{Theorems~\ref{#1}}
\newcommand{\lref}[1]{Lemma~\ref{#1}}
\newcommand{\cref}[1]{Conjecture~\ref{#1}}
\newcommand{\Cref}[1]{Conjecture~\ref{#1}}
\newcommand{\rcite}[1]{Ref.~\cite{#1}}
\newcommand{\rscite}[1]{Refs.~\cite{#1}}
\begin{document}
\title{Tomographic and Lie algebraic significance of  generalized symmetric informationally complete measurements}
\author{Huangjun Zhu}
\email{hzhu@pitp.ca}
\affiliation{Perimeter Institute for Theoretical Physics, Waterloo, Ontario, Canada N2L 2Y5}

\pacs{03.67.-a, 03.65.Wj, 02.10.De }



\begin{abstract}
Generalized symmetric informationally complete (SIC) measurements are SIC measurements that are not necessarily rank one. They are interesting originally because of their connection with rank-one SICs. Here we reveal several merits  of generalized SICs in connection with quantum state tomography and Lie algebra that are interesting in their own right.  These properties  uniquely characterize generalized SICs among minimal IC measurements although, on the face of it, they bear little  resemblance to the original definition.
In particular, we show that in  quantum state tomography generalized SICs are optimal  among minimal IC measurements with given average purity of  measurement outcomes. Besides its significance to the current study, this result may help  understand  tomographic efficiencies of minimal IC measurements under the influence of noise.
When minimal IC measurements are taken as  bases for the Lie algebra of the unitary group,  generalized SICs are uniquely characterized by the antisymmetry of the associated structure constants.
\end{abstract}

\date{\today}
\maketitle

\section{Introduction}

Quantum state tomography is a primitive of various quantum information processing tasks, such as quantum computation, communications and cryptography. To achieve high tomographic efficiency in  practice,  it is crucial to choose suitable measurements.
Of special interest are the type of
 measurements that are  \emph{informationally complete} (IC) with which every state can be determined completely by the measurement statistics.  An IC   measurement has at least $d^2$ outcomes for a $d$-level quantum system; those with  $d^2$ outcomes are called \emph{minimal}.

A \emph{symmetric informationally complete} (SIC) measurement \cite{Zaun11, ReneBSC04, Rene04the,Appl05,ScotG10, Zhu12the, Zhu10, ApplFZ14G}
is composed of $d^2$ subnormalized projectors onto pure states
$\Pi_j=|\psi_j\rangle\langle\psi_j|/d$ with equal pairwise inner product of $1/(d+1)$,
\begin{equation} \label{eq:SICinner}
|\langle\psi_j|\psi_k\rangle|^2=\frac{d\delta_{jk}+1}{d+1},\quad
j,k=0,1,\cdots,d^2-1.
\end{equation}
SICs  possess  many nice properties that make them an ideal choice of fiducial measurements.
For example,   they are
optimal for linear quantum state tomography
\cite{Scot06,ZhuE11,Zhu14IOC} and measurement-based quantum cloning~\cite{Scot06}. They  play a crucial role in  quantum Bayesianism
\cite{FuchS13,Fuch10,ApplEF11}.  They are also  interesting because of their  connections with mutually unbiased bases (MUB)
\cite{Ivan81,WootF89,DurtEBZ10,Woot06,ApplDF07}, 2-designs \cite{Zaun11, ReneBSC04,Scot06,ApplFZ14G}, equiangular
lines~\cite{Zaun11,ApplFZ14G},
Galois theory~\cite{ApplAZ13}, Lie algebra~\cite{ApplFF11, ApplFZ14G}, adjoint representation of the unitary group~\cite{ApplFZ14G}, and the graph isomorphism problem~\cite{Zhu12the}.
Up to now, analytical solutions of SICs and numerical solutions with high precision have been found up to dimension 67
\cite{Zaun11, ReneBSC04,Rene04the, Appl05,ScotG10,Zhu12the}.
Although SICs are expected to exist for  every finite dimension,  there is neither a universal construction recipe  nor a general existence proof~\cite{ApplFZ14G}.

 Generalized SICs are SICs whose outcomes are not necessarily rank one. They were first studied systematically by Appleby~\cite{Appl07},
and have raised some renewed interest recently~\cite{KaleG13}. Unlike  rank-one SICs, their existence is almost trivial, as the existence of regular tetrahedra, and several explicit construction methods are known~\cite{Appl07,KaleG13}.
  Nevertheless, the study of generalized SICs may promote our understanding about rank-one SICs and provide valuable  insight on the SIC existence problem.
They are also of interest from a practical point of view since measurements realized in experiments are usually not rank one due to various imperfections, such as noise and dark counts. It is thus highly desirable to determine whether generalized SICs retain some nice properties of rank-one SICs and whether they are optimal for some quantum information processing tasks under such scenarios.

In this paper we reveal several  nice properties of generalized SICs in connection with quantum state tomography and Lie algebra.
Remarkably, these properties  uniquely characterize generalized SICs among minimal IC measurements although they do not bear any resemblance to the  original definition.
In particular, we show that generalized SICs are optimal in  quantum state tomography with minimal IC measurements given the average purity of the measurement outcomes. Besides its significance to the current study, this result is pretty useful  in determining  the impact of noise on the tomographic efficiency. The outcomes of a minimal IC measurement can also  serve as a basis for the Lie algebra of the unitary group. We show that the structure constants associated with this basis are completely antisymmetric if and only if the measurement is a generalized SIC measurement. This observation generalizes the link between SICs and Lie algebra established in \rscite{ApplFF11,ApplFZ14G}. In the course of our study, we  derive several   useful results  about IC measurements, which may be of independent interest. Our study also leads to an intriguing connection between SICs and MUB popping up in   a tomography problem.

The rest of the paper is organized as follows. In \sref{sec:QST} we review the basic framework of  quantum state tomography. In \sref{sec:TightIC} we reexamine  tight IC measurements originally introduced by  Scott \cite{Scot06}. In \sref{sec:Balanced} we introduce the concept of balanced measurements and propose a conjecture about SICs and MUB. In \sref{sec:TomoSignificance} we reveal tomographic significance of generalized SICs as well as their connections with tight IC measurements and balanced measurements. In \sref{sec:LieSignificance} we present a cute characterization of generalized SICs in terms of  Lie algebra. \Sref{sec:Summary} summarizes this paper.

\section{\label{sec:QST}Quantum state tomography}

In this section, we review the basic framework of  quantum state tomography  tailored to the needs of the current work following \rscite{Scot06, ZhuE11, Zhu12the, Zhu14IOC}, in preparation for later discussions.

A  generalized measurement is composed of a set of outcomes
represented mathematically by positive operators $\Pi_j $ that sum up
to the identity  1. Given an unknown  state $\rho$, the
probability of obtaining the outcome $\Pi_j $ is given by the Born
rule: $p_j =\tr(\Pi_j \rho)$. Following the convention in \rscite{ZhuE11, Zhu12the, Zhu14IOC}, the probability can be expressed as an inner product $\dinner{\Pi_j }{\rho}$ between the operator kets $\dket{\Pi_j}$ and $\dket{\rho}$, where the double ket notation is used to distinguish them from ordinary kets. A measurement is IC if the outcomes $\Pi_j$ span the whole operator space.

For an IC measurement  $\{\Pi_j\}$,
there exists  a set of
reconstruction operators $\Theta_j$
such that
$\sum_j\douter{\Theta_j}{\Pi_j}=\mathbf{I}$,
where $\mathbf{I}$ is the identity superoperator. Any state can be recovered from the set of
probabilities $p_j$ using the formula $\rho=\sum_j p_j\Theta_j$. In practice, the probabilities $p_j$ need to be replaced by the frequencies $f_j$  since the number  of measurements is finite. The estimator based on these frequencies
$\hat{\rho}=\sum_j f_j\Theta_j$ is thus different from the true
state.  Nevertheless, the requirement $\sum_j\douter{\Theta_j}{\Pi_j}=\mathbf{I}$ on the reconstruction operators guarantees that the estimator is unbiased. The scaled MSE
matrix and MSE with respect to the Hilbert-Schmidt (HS) distance of the estimator $\hat{\rho}$ are given by~\rscite{ZhuE11, Zhu14IOC},
\begin{align}
\mathcal{C}(\rho)&=\sum_j\dket{\Theta_j}p_j\dbra{\Theta_j}-\douter{\rho}{\rho},       \label{eq:MSEmatrix}\\
\mse(\rho)&=\Tr\{\mathcal{C}(\rho)\}=\sum_j p_j\tr\bigl(\Theta_j^2\bigr)-\tr(\rho^2).  \label{eq:MSEg}
\end{align}
Here "$\Tr$"  denotes the trace of a superoperator,
and "$\tr$" of an ordinary operator.

The set of reconstruction operators is not unique except for a minimal IC
measurement. In linear state tomography, usually these  operators, once chosen, are independent of the measurement
statistics.  In that case, the average scaled MSE over unitarily equivalent states is given by
\begin{align}\label{eq:aMSEg}
\overline{\mse(\rho)}&=\frac{1}{d}\sum_j
\tr(\Pi_j)\tr\bigl(\Theta_j^2\bigr)-\tr(\rho^2).
\end{align}
 \emph{Canonical reconstruction
operators}
\begin{equation}\label{eq:CanonicalR}
\dket{\Theta_j}=\frac{d\mathcal{F}^{-1}\dket{\Pi_j}}{\tr(\Pi_j)}
\end{equation}
are the best  for minimizing the average scaled MSE \cite{Scot06,ZhuE11,Zhu12the, Zhu14IOC}, where
\begin{equation}\label{eq:FrameSO1}
\mathcal{F}=d\sum_j \frac{\douter{\Pi_j }{\Pi_j }}{\tr(\Pi_j)}
\end{equation}
is known as the frame superoperator. The minimum reads
\begin{align}\label{eq:aMSEcanonical}
\overline{\mse(\rho)}&:=\Tr(\mathcal{F}^{-1})-\tr(\rho^2).
\end{align}

If reconstruction operators are allowed to depend on the measurement statistics, the  \emph{optimal reconstruction operators}  are given by
\begin{equation}\label{eq:OptimalR}
\dket{\Theta_j}=p_j^{-1}\mathcal{F}(\rho)^{-1}\dket{\Pi_j},
\end{equation}
where
\begin{equation}\label{eq:FrameSOopt}
\mathcal{F}(\rho)=\sum_{j}\dket{\Pi_j}\frac{1}{p_j}\dbra{\Pi_j}
\end{equation}
is also called the frame superoperator, which generalizes the definition in \eref{eq:FrameSO1}~\cite{Zhu12the,Zhu14IOC}. Unlike usual linear state tomography, the optimal reconstruction operators depend on the unknown state and need to be chosen adaptively in practice.
The scaled MSE matrix and MSE turn out to be
\begin{align}
\mathcal{C}(\rho)&=\mathcal{F}(\rho)^{-1}-\douter{\rho}{\rho}=\barcal{F}(\rho)^{+},  \label{eq:BlueMSEmatrix}\\
\mse(\rho)&=\Tr\bigl\{\mathcal{F}(\rho)^{-1}\bigr\}-\tr(\rho^2)=\Tr\bigl\{\mathcal{F}(\rho)^{+}\bigr\},   \label{eq:BlueMSE}
\end{align}
where $\barcal{F}(\rho)$ is the projection of $\mathcal{F}(\rho)$ onto the space of traceless Hermitian operators and is the superoperator analog of the Fisher information matrix~\cite{Fish25}; $\barcal{F}(\rho)^{+}$ denotes the Moore-Penrose pseudoinverse of $\barcal{F}(\rho)^{+}$, that is, the inverse on its support. Therefore, the above equations actually give the Cram\'er-Rao bounds~\cite{Cram46, Rao45} for the scaled MSE matrix and MSE~\cite{Zhu12the,Zhu14IOC}.

For a minimal IC measurement, the  set of reconstruction operators is unique, so \esref{eq:BlueMSEmatrix} and \eqref{eq:BlueMSE} reduce to \esref{eq:MSEmatrix} and \eqref{eq:MSEg} with $\Theta_j$  canonical reconstruction operators.
When $\rho$ is the completely mixed state, that is $\rho=1/d$, \esref{eq:OptimalR} and  \eqref{eq:FrameSOopt}
reduce to \esref{eq:CanonicalR} and \eqref{eq:FrameSO1}, so the optimal reconstruction is also identical with the  canonical reconstruction. The scaled MSE matrix and MSE are respectively given by
\begin{align}
\mathcal{C}(\rho)&=\mathcal{F}^{-1}-\frac{1}{d^2}\douter{1}{1}=\barcal{F}^+,  \label{eq:BlueMSEmatrixCM}\\
\mse(\rho)&=\Tr(\mathcal{F}^{-1})-\frac{1}{d}=\Tr(\barcal{F}^+).   \label{eq:BlueMSECM}
\end{align}

\section{\label{sec:TightIC}Tight informationally complete measurements}
Tight IC measurements were first introduced by Scott~\cite{Scot06} as measurements  featuring particular simple state reconstruction. Rank-one tight IC measurements are also optimal under linear quantum state tomography and have thus attracted much attention recently \cite{RoyS07, ZhuE11, Zhu12the, Zhu14IOC,ApplFZ14G}. General tight IC measurements are still not well understood, although they are more relevant in real experiments. In this section, we
derive several useful results  about these  measurements, thereby deepening our understanding on this subject.
In particular, we  determine the minimal average MSE achievable  in linear quantum state tomography  given the average purity of  measurement outcomes and show that the minimum is attained only for tight IC measurements.
We also provide an alternative characterization of minimal tight IC measurements, which is quite useful for  understanding their structure and their connections with generalized SICs.

A measurement $\{\Pi_j\}$ is \emph{tight IC} \cite{Scot06, ZhuE11, Zhu12the,ApplFZ14G} if the frame superoperator has the following form
\begin{equation}\label{eq:TightIC}
\mathcal{F}=d\sum_j\frac{\douter{\Pi_j}{\Pi_j}}{\tr(\Pi_j)}=\alpha \mathbf{I}+\beta\douter{1}{1}
\end{equation}
for some positive constants $\alpha$ and $\beta$. Multiplying the equation by $\dket{1}$ on the right gives $\alpha+d\beta=d$, that is, $\beta=1-\alpha/d$. Taking trace of the equation yields
\begin{equation}
d^2\alpha+d\beta=d\sum_j\frac{\tr(\Pi_j^2)}{\tr(\Pi_j)}=d^2\sum_j \frac{\tr(\Pi_j)}{d} \wp_j=d^2 \wp,
\end{equation}
where $\wp_j=\tr(\Pi_j^2)/[\tr(\Pi_j)]^2$ is the purity of the outcome~$\Pi_j$ and $\wp=\sum_j\tr(\Pi_j) \wp_j/d$ can be understood as the average purity of the measurement $\{\Pi_j\}$. Both $\alpha$ and~$\beta$ are functions of the average purity,
\begin{equation}\label{eq:alphabeta}
\alpha=\frac{d^2\wp-d}{d^2-1},\quad \beta=\frac{d^2-d\wp}{d^2-1}.
\end{equation}
This equation implies that $\alpha\leq d/(d+1)$ and the inequality is saturated if and only if all $\Pi_j$ have rank one.

For a tight IC measurement satisfying \eref{eq:TightIC}, the  canonical reconstruction operators have a simple form,
\begin{equation}\label{eq:ReconTightIC}
\Theta_j =\frac{d^2\Pi_j-(d-\alpha)\tr(\Pi_j)}{d\alpha\tr(\Pi_j)},
\end{equation}
with
\begin{equation}\label{eq:ReconTightICnorm}
\tr(\Theta_j^2)=\frac{d^2\wp_j-d}{\alpha^2}+\frac{1}{d}.
\end{equation}
According to \eref{eq:MSEg}, the scaled  MSE associated with the canonical linear reconstruction is
\begin{equation}\label{eq:MSEtightIC}
\mse(\rho)=\frac{d^2}{\alpha^2}\Bigl[\wp(\rho)-\frac{1}{d}\Bigr]-\Bigl[\tr(\rho^2)-\frac{1}{d}\Bigr],
\end{equation}
where $\wp(\rho)=\sum_j p_j \wp_j$ is the average purity of the outcomes $\Pi_j$ weighted by the probabilities $p_j$. Note that $\wp(\rho)$ reduces to the average purity $\wp$ of $\{\Pi_j\}$ when $\rho$ is the completely mixed state; the average of $\wp(\rho)$ over unitarily equivalent states is also equal to $\wp$. Taking average in the above equation yields
\begin{equation}\label{eq:aMSEtightIC}
\overline{\mse(\rho)}=\frac{(d^2-1)^2}{d^2\wp-d}-\Bigl[\tr(\rho^2)-\frac{1}{d}\Bigr],
\end{equation}
where we have applied \eref{eq:alphabeta}.

The tomographic significance of tight IC measurements is revealed by the following theorem, which sets the tomographic efficiency limit of linear quantum state tomography in terms of the average purity of measurement outcomes.
\begin{theorem}\label{thm:TightICeff}
In linear quantum state tomography with any  IC measurement  with average purity $\wp$, the average scaled MSE over unitarily equivalent states is lower bounded by \eref{eq:aMSEtightIC}. The  bound is saturated if and only if the measurement is tight IC.
\end{theorem}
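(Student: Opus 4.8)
The plan is to reduce the entire statement to a spectral optimization for the frame superoperator $\mathcal{F}$ of \eref{eq:FrameSO1}. Since canonical reconstruction operators are the best for minimizing the average scaled MSE among all linear reconstructions, the smallest average scaled MSE achievable with a given measurement equals $\Tr(\mathcal{F}^{-1})-\tr(\rho^2)$ by \eref{eq:aMSEcanonical}. Hence it suffices to show that $\Tr(\mathcal{F}^{-1})$ is bounded below by $1/d+(d^2-1)^2/(d^2\wp-d)$ whenever the average purity equals $\wp$, and that equality holds exactly when $\mathcal{F}$ takes the tight-IC form.

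First I would record two elementary facts about $\mathcal{F}$ that hold for every IC measurement. Acting on $\dket{1}$ and using $\sum_j\Pi_j=1$ together with $\dinner{\Pi_j}{1}=\tr(\Pi_j)$ gives $\mathcal{F}\dket{1}=d\dket{1}$, so $\dket{1}$ is an eigenvector with eigenvalue $d$ and $\mathcal{F}$ leaves the $(d^2-1)$-dimensional subspace of traceless operators invariant. Taking the trace yields $\Tr(\mathcal{F})=d\sum_j\tr(\Pi_j^2)/\tr(\Pi_j)=d^2\wp$, exactly as in the derivation of \eref{eq:alphabeta}. Thus the eigenvalue along $\dket{1}$ is pinned at $d$, contributing $1/d$ to $\Tr(\mathcal{F}^{-1})$, while the remaining eigenvalues $\lambda_1,\dots,\lambda_{d^2-1}$ on the traceless subspace are strictly positive (by informational completeness) and satisfy $\sum_i\lambda_i=d^2\wp-d$.

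The core estimate is then a short convexity argument: by the Cauchy--Schwarz inequality $\bigl(\sum_i\lambda_i\bigr)\bigl(\sum_i\lambda_i^{-1}\bigr)\ge(d^2-1)^2$, so that $\sum_i\lambda_i^{-1}\ge(d^2-1)^2/(d^2\wp-d)$, with equality if and only if all the $\lambda_i$ coincide. Adding the fixed contribution $1/d$ and subtracting $\tr(\rho^2)$ reproduces the bound \eref{eq:aMSEtightIC}. For the equality case, flatness of the traceless spectrum forces every $\lambda_i$ to equal $\alpha=(d^2\wp-d)/(d^2-1)$, which together with the pinned eigenvalue $d$ along $\dket{1}$ reconstitutes $\mathcal{F}=\alpha\mathbf{I}+\beta\douter{1}{1}$ with $\beta=1-\alpha/d$, precisely the tight-IC condition \eref{eq:TightIC}; conversely, for a tight IC measurement \eref{eq:aMSEtightIC} was already shown to hold with equality.

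The two spectral identities and the Cauchy--Schwarz step are routine. The one point that needs care is the equality analysis: saturating \eref{eq:aMSEtightIC} requires both that canonical reconstruction be used \emph{and} that the traceless spectrum be flat, so I would phrase the ``only if'' direction as a statement about the minimal achievable average MSE, and verify explicitly that flatness of the traceless spectrum together with the eigenvalue $d$ on $\dket{1}$ recovers the full operator $\mathcal{F}$ in tight-IC form, rather than merely its restriction to the traceless subspace.
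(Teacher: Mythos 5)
Your proof is correct and takes essentially the same route as the paper's: reduce to minimizing $\Tr(\mathcal{F}^{-1})-\tr(\rho^2)$ over linear reconstructions via \eref{eq:aMSEcanonical}, note that $\Tr(\mathcal{F})=d^2\wp$ and that $\dket{1}$ is an eigenvector with eigenvalue $d$, and characterize the minimizer as the tight-IC form \eref{eq:TightIC}. The only difference is that you spell out the Cauchy--Schwarz/flat-spectrum step and the equality analysis, which the paper asserts without detail.
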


\begin{proof}
Let $\{\Pi_j\}$ be an IC measurement  with average purity $\wp$. The minimum average scaled MSE under linear tomography is given by $\Tr(\mathcal{F}^{-1})-\tr(\rho^2)$ according to \eref{eq:aMSEcanonical}. Note that $\Tr(\mathcal{F})=d^2\wp$ and that  $\dket{1}$ is an eigenvector of $\mathcal{F}$ with eigenvalue $d$. The minimum of $\Tr(\mathcal{F}^{-1})$ under these constraints is attained if and only if $\mathcal{F}$ has the form in \eref{eq:TightIC}, that is, if $\{\Pi_j\}$ is tight IC. In that case,    the lower bound for the average scaled MSE is indeed saturated.
\end{proof}

The minimum of $\overline{\mse(\rho)}$  is attained when $\wp=1$, that is, when the tight IC measurement is rank one.
So rank-one tight IC measurements are optimal in linear quantum state tomography  \cite{Scot06, ZhuE11, Zhu14IOC}.
The frame superoperator and the reconstruction operators now simplify to
\begin{equation}
\mathcal{F}=\frac{d}{d+1} (\mathbf{I}+\douter{1}{1}),\quad
\Theta_j =(d+1)\frac{\Pi_j}{\tr(\Pi_j)} -1.
\end{equation}
The scaled MSE is given by
\begin{equation}\label{eq:MSEtightICpure}
\mse(\rho)=d^2+d-1-\tr(\rho^2),
\end{equation}
which is  unitarily invariant. According to Scott~\cite{Scot06} (see also~\rcite{ApplFZ14G}), a rank-one measurement is tight IC if and only if the outcomes form a weighted 2-design.  Prominent examples of tight IC measurements are SIC measurements and complete mutually unbiased measurements, that is, measurements composed of complete sets of MUB. In the second  example, the scaled MSE can be reduced by the optimal reconstruction \cite{RehaH02,EmbaN04, Zhu12the,Zhu14IOC}, with the result
\begin{equation}
\mse(\rho)=d^2+d-(d+1) \tr(\rho^2).
\end{equation}
It is still unitarily invariant, which is quite rare among informationally overcomplete  measurements.

It is not easy to understand the  structure of tight IC measurements from the definition. The following lemma gives an alternative characterization of minimal tight IC measurements, which  is useful for understanding their structure and their connections with generalized SICs.
\begin{lemma}\label{lem:TightICinner}
A minimal IC measurement $\{\Pi_j\}$ is tight IC if and only if it satisfies the equation
\begin{equation}\label{eq:TightIC2}
\tr(\Pi_j\Pi_k)=\tilde{\alpha} \sqrt{\tr(\Pi_j)\tr(\Pi_k)}\delta_{jk}+\tilde{\beta} \tr(\Pi_j)\tr(\Pi_k)
\end{equation}
for some positive constants $\tilde{\alpha}$ and $\tilde{\beta}$.
\end{lemma}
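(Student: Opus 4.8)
The plan is to translate the tight-IC condition \eqref{eq:TightIC}, which is a statement about the frame superoperator on the $d^2$-dimensional operator space, into an equivalent statement about the Gram matrix of the trace-normalized outcomes, and to recognize \eqref{eq:TightIC2} as exactly that Gram condition. First I would set $\dket{\tilde\Pi_j}=\dket{\Pi_j}/\sqrt{\tr(\Pi_j)}$, so that $\mathcal F=d\sum_j\douter{\tilde\Pi_j}{\tilde\Pi_j}$, and introduce the synthesis map $V$ from $\mathbb{C}^{d^2}$ into operator space that sends the $j$-th standard basis vector to $\dket{\tilde\Pi_j}$. Because the measurement is minimal IC, the $d^2$ kets $\dket{\tilde\Pi_j}$ form a basis, so $V$ is invertible; this is the only place minimality is used. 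Then $\mathcal F/d=VV^\dagger$, while the Gram matrix $G:=V^\dagger V$ has entries $G_{jk}=\dinner{\tilde\Pi_j}{\tilde\Pi_k}=\tr(\Pi_j\Pi_k)/\sqrt{\tr(\Pi_j)\tr(\Pi_k)}$. Dividing \eqref{eq:TightIC2} by $\sqrt{\tr(\Pi_j)\tr(\Pi_k)}$ shows that \eqref{eq:TightIC2} is precisely the assertion
\begin{equation}
G=\tilde\alpha\,\mathbb{I}_{d^2}+\tilde\beta\,\outer{v}{v},
\end{equation}
where $\mathbb{I}_{d^2}$ is the identity matrix and $\ket v$ is the ordinary vector with components $v_j=\sqrt{\tr(\Pi_j)}$.

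The crux is to identify the two distinguished directions. The completeness relation $\sum_j\Pi_j=1$ reads $\sum_j\sqrt{\tr(\Pi_j)}\,\dket{\tilde\Pi_j}=\dket1$, that is $V\ket v=\dket1$, and dually $V^\dagger\dket1=\ket v$. Hence the special vector $\ket v$ of the Gram picture is the preimage under $V$ of the identity $\dket1$ that singles out the rank-one piece $\douter11$ in \eqref{eq:TightIC}. I would then invoke the standard intertwining $VV^\dagger V=V(V^\dagger V)$, i.e. $(\mathcal F/d)V=VG$, which for invertible $V$ gives $\mathcal F/d=VGV^{-1}$. Conjugating the Gram form by $V$ and using $V\ket v=\dket1$ together with $\bra vV^{-1}=\dbra1$ converts $G=\tilde\alpha\mathbb{I}_{d^2}+\tilde\beta\outer vv$ into $\mathcal F=d\tilde\alpha\,\mathbf I+d\tilde\beta\,\douter11$, which is the tight-IC form with $\alpha=d\tilde\alpha$ and $\beta=d\tilde\beta$; running the similarity in reverse gives the opposite implication, establishing the equivalence in both directions. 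The positivity of $\alpha,\beta$ matches that of $\tilde\alpha,\tilde\beta$ through $\alpha=d\tilde\alpha$, $\beta=d\tilde\beta$, so the sign conditions transfer.

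The only genuinely substantive step is the intertwining observation $V\ket v=\dket1$; everything else is the routine duality between $VV^\dagger$ and $V^\dagger V$ plus invertibility of $V$. I would also note a shorter route for the forward direction alone: for a minimal IC measurement the canonical reconstruction operators \eqref{eq:ReconTightIC} are biorthogonal to the outcomes, $\tr(\Theta_j\Pi_k)=\delta_{jk}$, and substituting \eqref{eq:ReconTightIC} into this relation and clearing denominators yields \eqref{eq:TightIC2} immediately, with $\tilde\alpha=\alpha/d$ and $\tilde\beta=\beta/d$.
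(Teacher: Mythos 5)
Your proof is correct, and while it starts with the same move as the paper---rescaling to $\tilde\Pi_j=\Pi_j/\sqrt{\tr(\Pi_j)}$ so that tightness becomes a statement about $d\sum_j\douter{\tilde\Pi_j}{\tilde\Pi_j}$---it then takes a genuinely different route for the core step. The paper simply cites Theorem 1 of \rcite{ApplFZ14G} as a black box to pass between the superoperator identity and the Gram-matrix identity, obtaining $\tilde{\alpha}=\alpha/d$ and $\tilde{\beta}=\beta/(\alpha+d\beta)$, and then uses the normalization $\alpha+d\beta=d$ to get $\tilde{\beta}=\beta/d$. You instead prove that passage from scratch: minimality makes the synthesis map $V$ invertible, so $\mathcal{F}/d=VV^\dagger$ and $G=V^\dagger V$ are similar via $V$, and the intertwining relations $V\ket{v}=\dket{1}$ and $V^\dagger\dket{1}=\ket{v}$ (which encode completeness, $\sum_j\Pi_j=1$) carry the rank-one pieces $\outer{v}{v}$ and $\douter{1}{1}$ onto each other, giving the equivalence with $\alpha=d\tilde{\alpha}$, $\beta=d\tilde{\beta}$, consistent with the paper's constants. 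What your route buys is self-containedness and transparency: one sees exactly where minimality enters (invertibility of $V$) and where completeness enters (the intertwining relations), both of which are hidden inside the citation in the paper's proof; what the paper's route buys is brevity, plus a constant relation $\tilde{\beta}=\beta/(\alpha+d\beta)$ valid before imposing normalization. Your side remark is also sound: since for a minimal IC measurement the reconstruction operators form the unique dual basis, $\tr(\Theta_j\Pi_k)=\delta_{jk}$, substituting \eqref{eq:ReconTightIC} indeed yields \eqref{eq:TightIC2} with $\tilde{\alpha}=\alpha/d$ and $\tilde{\beta}=(d-\alpha)/d^2=\beta/d$, which is arguably the quickest derivation of the forward direction, though, as you note, it does not by itself give the converse.
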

\begin{proof}
Define
$L_j:=\Pi_j/\sqrt{\tr(\Pi_j)}$, then $\{\Pi_j\}$ satisfies  \eref{eq:TightIC} if and only if $\{L_j\}$ satisfies
\begin{equation}
d\sum_j\douter{L_j}{L_j}=\alpha \mathbf{I}+\beta\douter{1}{1}.
\end{equation}
According to Theorem 1 in \rcite{ApplFZ14G},  this equation is equivalent to
\begin{equation}
\tr(L_j L_k) =\tilde{\alpha} \delta_{jk} +\tilde{\beta} \tr(L_j)\tr(L_k),
\end{equation}
where $\tilde{\alpha}=\alpha/d$ and $\tilde{\beta}=\beta/(\alpha+d\beta)$. Replace $L_j$ with $\Pi_j/\sqrt{\tr(\Pi_j)}$ in the equation yields
\begin{equation}
\frac{\tr(\Pi_j\Pi_k)}{\sqrt{\tr(\Pi_j)\tr(\Pi_k)}}=\tilde{\alpha} \delta_{jk} +\tilde{\beta} \sqrt{\tr(\Pi_j)\tr(\Pi_k)},
\end{equation}
which is equivalent to \eref{eq:TightIC2}. So \eref{eq:TightIC} is equivalent to \eref{eq:TightIC2} with
$\tilde{\alpha}=\alpha/d$ and $\tilde{\beta}=\beta/(\alpha+d\beta)$. Since $\alpha+d\beta=d$ for a tight IC measurement, it follows that $\tilde{\beta}=\beta/d$.
\end{proof}

\section{\label{sec:Balanced}Balanced measurements}
An IC measurement is \emph{quasi-balanced} if  the scaled MSE $\mse(\rho)$ [which is equal to the Cram\'er-Rao bound;  see \eref{eq:BlueMSE}] of the optimal reconstruction  is unitarily invariant. It is \emph{balanced} if in addition the scaled MSE matrix [see~\eref{eq:BlueMSEmatrixCM}] at the completely mixed state is invariant under unitary transformations of the measurement outcomes \footnote{The balanced measurements defined in the author's thesis~\cite{Zhu12the} correspond to quasi-balanced measurements here.}.  Intuitively, balanced measurements are those measurements whose tomographic efficiencies are most indifferent to the identity of the true state. This concept also has an intimate connection with operationally invariant information proposed in~\rcite{BrukZ99} (cf.~\rcite{RehaH02}).

According to \eref{eq:BlueMSE}, an IC measurement is quasi-balanced if and only if $\Tr\{\mathcal{F(\rho)}^{-1}\}$ is unitarily invariant.
According to \eref{eq:BlueMSEmatrixCM}, the additional requirement for a balanced measurement is satisfied if and only if the frame superoperator $\mathcal{F}$ is unitarily invariant. Since the adjoint representation of the unitary group has only two irreducible components, one spanned by the identity operator $1$ and the other  by traceless Hermitian operators, this requirement is satisfied if and only if $\mathcal{F}$ has the form as in \eref{eq:TightIC}, that is, if the measurement is tight IC. Therefore, a balanced  measurement is one that  is both tight IC and quasi-balanced.

According to \sref{sec:TightIC} (see also \rscite{Zhu12the, Zhu14IOC}) and the previous discussion, SIC measurements and complete mutually unbiased measurements are balanced. Actually,
 they are the  only known rank-one balanced measurements with finite number of outcomes (assuming different outcomes of a measurement are not proportional to each other; the covariant measurement is balanced but with infinite number of outcomes).
 It is plausible that there is a  nontrivial connection between SICs and MUB underlying this coincidence.
\begin{conjecture}
SIC measurements and complete mutually unbiased measurements are the only rank-one balanced measurements with finite number of outcomes.
\end{conjecture}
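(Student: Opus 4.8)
The plan is to exploit both ingredients of the definition: a balanced measurement is in particular tight IC, and by Scott's characterization recalled after \eref{eq:MSEtightICpure} a rank-one tight IC measurement is exactly a finite weighted complex-projective 2-design $\{(w_j,P_j)\}$ with $P_j=\outer{\psi_j}{\psi_j}$ and $\sum_j w_j P_j=1$. The problem therefore reduces to deciding which such 2-designs are also quasi-balanced, i.e.\ which ones make $\Tr\{\mathcal{F}(\rho)^{-1}\}$ [see \eref{eq:BlueMSE}] depend on $\rho$ only through its spectrum, where $\mathcal{F}(\rho)=\sum_j(w_j/q_j)\douter{P_j}{P_j}$ and $q_j=\tr(P_j\rho)$.

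First I would analyze this condition perturbatively about the completely mixed state, setting $\rho=(1+tH)/d$ with $H$ traceless Hermitian and expanding $\mathcal{F}(\rho)^{-1}$ in $t$ around $\mathcal{F}_0=\mathcal{F}(1/d)=\frac{d}{d+1}(\mathbf{I}+\douter{1}{1})$. Unitary invariance forces the coefficient of $t^n$ to be a symmetric function of the eigenvalues of $H$: the linear term must vanish and the quadratic term must be proportional to $\tr(H^2)$. The linear term vanishes automatically because $\sum_j w_jP_j=1$. For the quadratic term one uses the 2-design twirl identity $\sum_j w_j\tr(P_jX)P_j=\frac{1}{d+1}(\tr(X)\,1+X)$ to dispose of the pieces already fixed by the design; what survives is the single requirement that the superoperator
\[
\mathcal{R}=\sum_{j,k}w_jw_k\tr(P_jP_k)^2\,\douter{P_j}{P_k}
\]
act as a scalar on the traceless Hermitian operators. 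Writing $\bar P_j=P_j-1/d$ and $g_{jk}=\tr(\bar P_j\bar P_k)$, this is equivalent to $\sum_{j,k}w_jw_k g_{jk}^2\douter{\bar P_j}{\bar P_k}$ being a multiple of the identity, a genuine fourth-moment condition on the Bloch configuration $\{\bar P_j\}$ that is \emph{not} implied by the 2-design property. One then checks directly, from \eref{eq:SICinner} and from the mutually unbiased values $\tr(P_jP_k)\in\{0,1/d\}$, that SICs and complete mutually unbiased measurements do satisfy it, consistent with their known unitarily invariant MSE.

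The higher-order terms produce an infinite hierarchy of analogous conditions, each asserting that a superoperator assembled from higher powers $\tr(P_jP_k)^m$ is scalar on traceless operators; together these are polynomial constraints on the weights and on the angle set $\{\tr(P_jP_k):j\neq k\}$. The strategy for the converse direction is to argue that this hierarchy is rigid enough to leave only two possibilities: the off-diagonal inner products all equal $1/(d+1)$, in which case the equiangular tight frame of $d^2$ lines is a SIC; or they take the two values $\{0,1/d\}$ arranged as $d+1$ orthonormal bases, in which case one has complete mutually unbiased measurements.

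The hard part will be exactly this classification step. Each moment equation is individually tractable, but excluding \emph{every} competing finite 2-design---particularly those whose angle set contains three or more distinct values---seems to require genuine input from the theory of tight frames, equiangular lines, and association schemes rather than direct computation. Two features must be used essentially: finiteness, since the continuous covariant measurement is balanced with infinitely many outcomes and so must be ruled out by a rigidity or finiteness argument; and the interplay between the tightness condition \eref{eq:TightIC} and the fourth-moment condition on $\mathcal{R}$, which is what separates the two surviving families from generic 2-designs. Closing this gap is precisely what keeps the statement at the level of a conjecture.
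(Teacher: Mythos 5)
First, a point of order: this statement is left as an \emph{open conjecture} in the paper, so there is no proof of it to compare yours against. The paper establishes only the forward inclusion --- SIC measurements and complete mutually unbiased measurements are balanced, which follows from their tight IC property together with the unitarily invariant MSE expressions in and below \eref{eq:MSEtightICpure} --- and remarks that these are the only known finite rank-one examples. Your proposal is likewise not a proof but a strategy, and to your credit you say so explicitly. The skeleton is sound and matches the paper's framework: balanced means tight IC plus quasi-balanced, rank-one tight IC measurements are exactly weighted 2-designs (Scott's characterization), and quasi-balance amounts to unitary invariance of $\Tr\{\mathcal{F}(\rho)^{-1}\}$ from \eref{eq:BlueMSE}. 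Your second-order computation is also essentially correct: the zeroth- and first-order terms are fixed by $\sum_j w_jP_j=1$ and the 2-design identity, and what survives at second order is the requirement that $\mathcal{R}=\sum_{j,k}w_jw_k\tr(P_jP_k)^2\douter{P_j}{P_k}$ act as a scalar on traceless Hermitian operators --- a fourth-moment condition that SICs and complete mutually unbiased measurements do satisfy (one can verify this directly from \eref{eq:SICinner} and from the angle set $\{0,1/d\}$), but which is not implied by the 2-design property alone.

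The genuine gap is the one you name yourself: the converse classification. You have an infinite hierarchy of necessary polynomial conditions on the weights and the angle set, but no argument that the only finite solutions are the equiangular configuration (a SIC) and the two-angle configuration of $d+1$ orthonormal bases (complete MUB). Excluding every other finite weighted 2-design --- in particular designs whose angle set has three or more values, such as orbits of higher unitary designs --- is precisely the open problem, and nothing in your proposal (or in the paper) supplies the needed rigidity; as you observe, finiteness must enter essentially, since the Haar-covariant measurement is balanced with infinitely many outcomes. A secondary point you should also settle if you pursue this route: one must justify that the Taylor hierarchy at the completely mixed state exhausts the quasi-balance condition. This does hold, because $\rho\mapsto\Tr\{\mathcal{F}(\rho)^{-1}\}$ is a rational, hence real-analytic, function on the set where $\mathcal{F}(\rho)$ is invertible, so vanishing of $\Tr\{\mathcal{F}(\rho)^{-1}\}-\Tr\{\mathcal{F}(U\rho U^{\dagger})^{-1}\}$ to all orders at $\rho=1/d$ forces it to vanish on the connected domain; this deserves to be stated rather than assumed. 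In short: your setup is a correct formalization of the problem and a reasonable line of attack, but the statement remains a conjecture both in the paper and after your proposal.
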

To appreciate the difficulty in constructing balanced measurements, note that combinations of balanced measurements are generally not balanced. For example, the cube measurement in the case of a qubit is not balanced although it is composed of two SIC measurements~\cite{Zhu14IOC}.

 In general, it is not easy to characterize all quasi-balanced measurements. For a minimal IC measurement, since the set of  reconstruction operators is unique, the minimal scaled  MSE  $\mse(\rho)$ is  determined by \eref{eq:MSEg}, where $\Theta_j$ are canonical reconstruction operators. It  is unitarily invariant if and only if  $\sum_j p_j \tr(\Theta_j^2)$ is unitarily invariant and, consequently, independent of $\rho$. This is possible if  $\tr(\Theta_j^2)$ is independent of $j$ and only then.

\begin{lemma}
A minimal IC measurement is quasi-balanced if and only if all reconstruction operators have the same HS norm.
\end{lemma}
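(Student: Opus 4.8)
The plan is to start from the scaled MSE of the optimal reconstruction, which for a \emph{minimal} IC measurement coincides with the unique canonical reconstruction, so that the Cram\'er--Rao expression \eref{eq:BlueMSE} collapses to the elementary formula \eref{eq:MSEg}. Writing $p_j=\tr(\Pi_j\rho)$ and collecting terms, I would record
\begin{equation}
\mse(\rho)=\sum_j p_j\tr(\Theta_j^2)-\tr(\rho^2)=\tr(\rho A)-\tr(\rho^2),\quad A:=\sum_j\tr(\Theta_j^2)\,\Pi_j .
\end{equation}
Because $\tr(\rho^2)$ is already a unitary invariant of $\rho$, the measurement is quasi-balanced exactly when the linear functional $\rho\mapsto\tr(\rho A)$ is constant on each unitary orbit of states.

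Next I would pin down what this invariance says about $A$. Restricting to pure states $\rho=\outer{\psi}{\psi}$, all of which lie on a single orbit, quasi-balancedness forces $\bra{\psi}A\ket{\psi}=\mse(\rho)+1$ to be the same for every $\ket{\psi}$. Since $A$ is Hermitian (the $\Theta_j$ are the HS-dual basis of the Hermitian $\Pi_j$ and hence Hermitian, with real coefficients $\tr(\Theta_j^2)$), constancy of the expectation value forces $A=\lambda 1$ for some scalar $\lambda$. This also explains the "consequently independent of $\rho$" remark preceding the lemma: once $A\propto 1$, $\tr(\rho A)=\lambda\tr(\rho)=\lambda$ is not merely orbit-constant but fully $\rho$-independent.

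The crux is then to convert $A=\lambda 1$ into equality of the HS norms. Setting $c_j:=\tr(\Theta_j^2)$ and using $\sum_j\Pi_j=1$, the condition reads $\sum_j(c_j-\lambda)\Pi_j=0$. Here I would invoke the defining property of a minimal IC measurement: its $d^2$ outcomes $\Pi_j$ are linearly independent and thus form a basis of the operator space. Linear independence forces $c_j=\lambda$ for every $j$, i.e.\ all $\Theta_j$ share the common HS norm $\sqrt{\lambda}$. The reverse implication is immediate: if all $\tr(\Theta_j^2)=c$, then $A=c\sum_j\Pi_j=c\,1$ and $\mse(\rho)=c-\tr(\rho^2)$ is manifestly unitarily invariant.

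I expect the only genuine obstacle to be the converse direction, where one must rule out unequal norms conspiring to produce a multiple of the identity. This is precisely where \emph{minimality} is indispensable: for an informationally over-complete measurement the $\Pi_j$ are linearly dependent, so $\sum_j(c_j-\lambda)\Pi_j=0$ can hold with non-constant $c_j$, and the clean equivalence would break down.
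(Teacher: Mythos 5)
Your proof is correct and follows essentially the same route as the paper: the paper likewise reduces $\mse(\rho)$ to the canonical-reconstruction formula \eref{eq:MSEg}, observes that unitary invariance of $\sum_j p_j\tr(\Theta_j^2)$ forces this linear functional to be independent of $\rho$, and concludes that this happens precisely when the $\tr(\Theta_j^2)$ are all equal. Your write-up merely makes explicit the steps the paper leaves implicit (introducing $A=\sum_j\tr(\Theta_j^2)\Pi_j$, deducing $A\propto 1$ from constancy on pure states, and invoking linear independence of the $\Pi_j$ for a minimal IC measurement).
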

According to this lemma, any group covariant minimal IC measurement is quasi-balanced since all reconstruction operators have the same spectrum due to symmetry.

For a minimal tight IC measurement, according to \eref{eq:ReconTightIC} or \eqref{eq:ReconTightICnorm},  reconstruction operators have the same HS norm if and only if  outcomes have the same purity.
\begin{lemma}\label{lem:TightBalance}
A minimal tight IC measurement is balanced if and only if all outcomes have the same purity.
\end{lemma}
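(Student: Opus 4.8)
The plan is to assemble the claim from three facts already in hand, so that essentially no new computation is required. The starting observation is the characterization of balanced measurements established just above: a measurement is balanced precisely when it is simultaneously tight IC and quasi-balanced. Since the hypothesis of the lemma already supplies tightness, the whole statement collapses to the single equivalence that, for a minimal tight IC measurement, being quasi-balanced is equivalent to all outcomes sharing a common purity. That is the only assertion I would actually prove.

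For the reduction I would invoke the preceding lemma, which states that a minimal IC measurement is quasi-balanced if and only if all of its reconstruction operators carry the same HS norm. Minimality is essential here, because it forces the set of reconstruction operators to be unique; there is thus no freedom of choice to track, and the relevant operators are exactly the canonical ones of \eref{eq:ReconTightIC}. It then remains to translate "equal HS norm" into "equal purity," and for this I would simply read off the norms from \eref{eq:ReconTightICnorm}, namely $\tr(\Theta_j^2)=(d^2\wp_j-d)/\alpha^2+1/d$. Because $d$ and $\alpha$ are common to all outcomes, this quantity depends on the label $j$ only through the outcome purity $\wp_j$, and chaining the three implications together yields the stated equivalence.

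The only point demanding a moment's care — and the closest thing to an obstacle — is checking that equal HS norms genuinely \emph{force} equal purities rather than merely permitting them. Here I would observe that \eref{eq:ReconTightICnorm} presents $\tr(\Theta_j^2)$ as an affine function of $\wp_j$ with slope $d^2/\alpha^2$, and that this slope is nonzero since $\alpha>0$ for any tight IC measurement (indeed $\alpha=(d^2\wp-d)/(d^2-1)$ by \eref{eq:alphabeta}). The map $\wp_j\mapsto\tr(\Theta_j^2)$ is therefore strictly monotonic, so $\tr(\Theta_j^2)$ is constant in $j$ exactly when $\wp_j$ is, which closes the argument.
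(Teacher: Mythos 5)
Your proposal is correct and follows the same route the paper takes: it combines the characterization of balanced measurements as tight IC plus quasi-balanced, the lemma that a minimal IC measurement is quasi-balanced iff all reconstruction operators have equal HS norm, and \eref{eq:ReconTightICnorm} to translate equal norms into equal purities. Your extra check that the affine map $\wp_j\mapsto\tr(\Theta_j^2)$ has nonzero slope (since $\alpha>0$) is a point the paper leaves implicit, but otherwise the arguments coincide.
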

Alternatively, this lemma follows from \eref{eq:MSEtightIC}.

\section{\label{sec:TomoSignificance}Tomographic significance of generalized SICs}

In this section we reveal several  tomographic merits of  generalized SICs after a short introduction. In particular, we show that among minimal IC measurements, they are identical with balanced measurements and  are optimal in quantum state tomography given the average purity of measurement outcomes. Our study generalizes the result of Scott~\cite{Scot06} that SICs are optimal minimal IC measurements for linear quantum state tomography.

\subsection{Generalized SICs}

A measurement $\{\Pi_j=P_j/d\}$ with $n$ elements is called a generalized SIC~\cite{Appl07,KaleG13} if it is IC and satisfies
\begin{equation}\label{eq:GenSIC}
\tr(P_jP_k)=\alpha\delta_{jk}+\zeta
\end{equation}
for some real constants $\alpha$ and $\zeta$. The IC requirement implies that the Gram matrix of $\{P_j\}$ has  rank $d^2$, so that $\alpha>0$ and $n=d^2$.  Summing over $k$ in \eref{eq:GenSIC} yields $\alpha+d^2\zeta=d\tr(P_j)=d$. So all outcomes $\Pi_j$  have the same trace of $1/d$ and the same purity
\begin{equation}
\wp=\frac{\tr(\Pi_j^2)}{[\tr(\Pi_j)]^2}=\tr(P_j^2)= \frac{(d^2-1)\alpha +d}{d^2}.
\end{equation}
Consequently,
\begin{equation}\label{eq:alphabetaGenSIC}
\alpha=\frac{d^2\wp-d}{d^2-1},\quad \zeta=\frac{d-\wp}{d^2-1}.
\end{equation}
It follows that  $\alpha\leq d/(d+1)$ and the upper bound is saturated if and only if the generalized SIC is rank one.
Note that the expression for $\alpha$ and its range are the same as that for a tight IC measurement; cf.~\eref{eq:alphabeta}.

According to the above discussion, the outcomes of any generalized SIC can be written as \begin{equation}
\Pi_j=\frac{1}{d^2}(1+B_j),
\end{equation}
where the $B_j$ form a regular simplex in the space of traceless Hermitian operators. Conversely, any such regular simplex defines a generalized SIC as long as the minimum eigenvalues of $B_j$ are bounded from below by $-1$, as noticed by Appleby~\cite{Appl07}. In principle,  this observation allows constructing all generalized SICs.  For example, any generalized SIC in dimension 2 has the form
\begin{equation}
\Pi_j=x \tilde{\Pi}_j +\frac{1-x}{d^2},
\end{equation}
 where $\{\tilde{\Pi}_j \}$ is a rank-one SIC and $0< x\leq 1$.
 Unfortunately, in general, it is not clear at all whether the set of generalized SICs so constructed contains a rank-one SIC. An alternative construction (with the same limitation) was recently proposed in  \rcite{KaleG13}.

\subsection{Tomographic significance}
According to \lref{lem:TightICinner}, any generalized SIC is a tight IC measurement. If the measurement $\{\Pi_j\}$ satisfies \eref{eq:GenSIC}, then it also satisfies  \eref{eq:TightIC} with the same $\alpha$ and $\beta=d\zeta$. Now \lref{lem:TightBalance} implies that it is also a balanced measurement.
What is remarkable is that  the converse holds for any minimal IC measurement.
\begin{theorem}\label{thm:BalanceGenSIC}
A minimal IC measurement is balanced if and only if it is a generalized SIC measurement.
\end{theorem}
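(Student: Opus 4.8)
The forward implication is essentially already in place from the discussion preceding the theorem: a generalized SIC satisfies \eref{eq:GenSIC}, hence is tight IC by \lref{lem:TightICinner}, and since all its outcomes share the same purity by construction, \lref{lem:TightBalance} certifies that it is balanced. So the plan is to devote the work to the converse, namely showing that every minimal balanced measurement obeys \eref{eq:GenSIC}.

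Suppose $\{\Pi_j\}$ is a minimal balanced measurement. By the characterization in \sref{sec:Balanced}, being balanced means being simultaneously tight IC and quasi-balanced; in particular it is tight IC, so \lref{lem:TightICinner} applies. Moreover, invoking \lref{lem:TightBalance}, balancedness of a tight IC measurement forces all outcomes to share a common purity $\wp$. The strategy is then to feed this equal-purity constraint back into the inner-product relation \eref{eq:TightIC2} so as to pin down the traces, and finally to read off \eref{eq:GenSIC}.

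Concretely, I would first set $j=k$ in \eref{eq:TightIC2} to get $\tr(\Pi_j^2)=\tilde{\alpha}\tr(\Pi_j)+\tilde{\beta}[\tr(\Pi_j)]^2$, so that the purity reads $\wp_j=\tilde{\alpha}/\tr(\Pi_j)+\tilde{\beta}$. Equality of the purities across $j$, together with $\tilde{\alpha}>0$, then forces all the traces $\tr(\Pi_j)$ to coincide; summing $\sum_j\tr(\Pi_j)=\tr(1)=d$ over the $d^2$ outcomes fixes their common value at $\tr(\Pi_j)=1/d$, i.e. $\tr(P_j)=1$ for $P_j=d\Pi_j$. Substituting these equal traces into \eref{eq:TightIC2} and multiplying through by $d^2$ yields $\tr(P_jP_k)=d\tilde{\alpha}\,\delta_{jk}+\tilde{\beta}$, which is precisely \eref{eq:GenSIC} with $\alpha=d\tilde{\alpha}$ and $\zeta=\tilde{\beta}$; positivity of $\tilde{\alpha}$ gives $\alpha>0$, consistent with information completeness, so $\{\Pi_j\}$ is a generalized SIC.

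The only genuinely delicate step is the deduction that a common purity forces a common trace. It hinges on the diagonal part of \eref{eq:TightIC2} coupling purity and trace through the \emph{single} positive parameter $\tilde{\alpha}$: because $\wp_j-\tilde{\beta}=\tilde{\alpha}/\tr(\Pi_j)$ and $\tilde{\alpha}\neq0$, a fixed left-hand side uniquely determines $\tr(\Pi_j)$. Were $\tilde{\alpha}$ allowed to vanish this rigidity would collapse, but positivity of $\tilde{\alpha}$ (inherited from $\alpha>0$ in the tight IC definition) excludes that degenerate case. Everything else is routine substitution, so I expect this trace-rigidity argument to be the crux.
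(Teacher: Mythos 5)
Your proof is correct and follows the paper's own route: the paper likewise reduces the converse, via \lref{lem:TightBalance}, to showing that a minimal tight IC measurement with outcomes of equal purity is a generalized SIC, and establishes that step through the inner-product characterization of \lref{lem:TightICinner} (packaged there as \lref{lem:TightICgenSIC}). Your trace-rigidity computation simply spells out explicitly what the paper's terse proof of that lemma leaves implicit.
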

Before proving this theorem, we first point out its main implications.
As an immediate consequence, a rank-one minimal IC measurement is balanced if and only if it is SIC.
The scaled MSE achievable with a generalized SIC is given by \eref{eq:aMSEtightIC} without the line over $\mse(\rho)$, that is,
\begin{equation}\label{eq:MSEgenSIC}
\mse(\rho)=\frac{(d^2-1)^2}{d^2\wp-d}-\Bigl[\tr(\rho^2)-\frac{1}{d}\Bigr].
\end{equation}
 \Thsref{thm:TightICeff} and \ref{thm:BalanceGenSIC} together imply that
\begin{theorem}\label{thm:GenSICeff}
In  quantum state tomography with a minimal  IC measurement  with average purity $\wp$, the maximal scaled  MSE of any unbiased estimator over unitarily equivalent states is bounded from below by \eref{eq:MSEgenSIC}. The  bound can be saturated if and only if the measurement is a generalized SIC.
\end{theorem}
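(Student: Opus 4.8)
The plan is to reduce the statement to \Thsref{thm:TightICeff} and \ref{thm:BalanceGenSIC} by bridging the gap between the \emph{average} MSE controlled by \Thref{thm:TightICeff} and the \emph{maximal} MSE of an arbitrary unbiased estimator appearing here. Two elementary facts do the bridging: the maximum of any quantity over a unitary orbit dominates its orbit average, and the Cram\'er--Rao bound \eref{eq:BlueMSE} lower bounds the MSE of every unbiased estimator pointwise. The key structural input is that for a minimal IC measurement the reconstruction operators are unique, so the optimal reconstruction coincides with the canonical one and its scaled MSE equals the Cram\'er--Rao bound $\mse(\rho)=\Tr\{\mathcal{F}(\rho)^{-1}\}-\tr(\rho^2)$ at every state.

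First I would establish the lower bound. Fix a minimal IC measurement $\{\Pi_j\}$ with average purity $\wp$ and an arbitrary unbiased estimator with MSE $M(\rho)$. The Cram\'er--Rao inequality gives $M(\rho)\geq\mse(\rho)$ for every $\rho$, so the maximum of $M$ over the unitary orbit of $\rho$ is at least the orbit average of $\mse$. Since $\tr(\rho^2)$ is a unitary invariant, this orbit average is precisely the average scaled MSE \eref{eq:aMSEg} of the canonical linear reconstruction, namely $\Tr(\mathcal{F}^{-1})-\tr(\rho^2)$, which \Thref{thm:TightICeff} bounds below by \eref{eq:aMSEtightIC}; its right-hand side is identical to \eref{eq:MSEgenSIC}. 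This proves the bound for every unbiased estimator and every state.

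Next I would read off the saturation condition by forcing equality through the chain. Equality at the \Thref{thm:TightICeff} step requires the measurement to be tight IC, by the equality clause of that theorem. Equality at the step ``maximum $\geq$ average'' requires $\mse(\rho)$ to be constant on unitary orbits, i.e. unitarily invariant, which is exactly the quasi-balanced condition; combined with tightness this makes the measurement balanced (\sref{sec:Balanced}), and \Thref{thm:BalanceGenSIC} then identifies it as a generalized SIC. For the converse I would use that a generalized SIC is tight IC with all outcome purities equal (\lsref{lem:TightICinner} and \ref{lem:TightBalance}), so that $\wp(\rho)$ in \eref{eq:MSEtightIC} is the constant $\wp$ and $\mse(\rho)$ collapses to the unitarily invariant value \eref{eq:MSEgenSIC}; taking the estimator to be the canonical reconstruction then makes the orbit maximum equal \eref{eq:MSEgenSIC}, saturating the bound. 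The routine verification is that \eref{eq:aMSEtightIC} and \eref{eq:MSEgenSIC} coincide after substituting $\alpha$ from \eref{eq:alphabetaGenSIC}. The main obstacle is the saturation direction: one must check that a single unbiased estimator saturating the worst-case MSE forces equality in all three inequalities simultaneously, thereby extracting \emph{both} tight-IC-ness (from the purity-constrained optimization of \Thref{thm:TightICeff}) and quasi-balancedness (from maximum $=$ average), since only their conjunction is strong enough to invoke \Thref{thm:BalanceGenSIC}; the Cram\'er--Rao inequality together with the uniqueness of the reconstruction is what guarantees the conclusion holds against all unbiased estimators rather than merely linear ones.
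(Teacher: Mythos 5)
Your proposal is correct and follows essentially the same route as the paper: the paper derives \thref{thm:GenSICeff} directly from \thsref{thm:TightICeff} and \ref{thm:BalanceGenSIC}, noting (as you do) that the restriction to linear estimators can be dropped because the Cram\'er--Rao bound is saturated by the unique (canonical) reconstruction of a minimal IC measurement, and that maximum-over-orbit equals average-over-orbit exactly when $\mse(\rho)$ is unitarily invariant, i.e.\ when tightness is upgraded to balancedness. Your write-up simply makes explicit the chain of inequalities and equality conditions that the paper leaves implicit in its one-line derivation.
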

 Like \thref{thm:TightICeff}, this theorem sets the tomographic efficiency limit  in terms of the average purity of measurement outcomes, except that  the figure of merit is the maximal scaled MSE instead of the average scaled MSE. Besides, we are not restricted to linear estimators, because  the Cram\'er-Rao bound for the scaled MSE is saturated in canonical linear state tomography with a minimal IC measurement.
 In addition to providing neat characterizations of tight IC measurements and generalized SIC measurements, the two theorems are quite useful in studying tomographic efficiencies of minimal IC measurements. As an implication of \thref{thm:GenSICeff},  the maximal scaled  MSE with  minimal  IC measurements is bounded from below by \eref{eq:MSEtightICpure}, and the bound can be saturated only for rank-one SIC measurements~\cite{Scot06}.

According to \lref{lem:TightBalance}, to prove \thref{thm:BalanceGenSIC}, it suffices to show that a minimal tight IC measurement is a generalized SIC measurement if and only if all outcomes have the same purity, which follows from the following lemma.
\begin{lemma}\label{lem:TightICgenSIC}
A minimal tight IC measurement $\{\Pi_j\}$ is a generalized SIC measurement if and only if any of the following conditions is satisfied:
\begin{enumerate}
\item $\tr(\Pi_j)$ is independent of $j$.

\item $\tr(\Pi_j^2)$ is independent of $j$.

\item The purity of $\Pi_j$ is independent of $j$.

\item $\{\Pi_j\}$ is equiangular.
\end{enumerate}
\end{lemma}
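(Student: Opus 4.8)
The plan is to reduce the entire lemma to a single scalar identity extracted from the tight IC characterization of \lref{lem:TightICinner}. Writing $t_j:=\tr(\Pi_j)$ and setting $k=j$ and then $k\neq j$ in \eref{eq:TightIC2}, I would record the two relations $\tr(\Pi_j^2)=\tilde\alpha t_j+\tilde\beta t_j^2$ and $\tr(\Pi_j\Pi_k)=\tilde\beta t_j t_k$ for $j\neq k$. The crucial observation is that a tight IC measurement is a generalized SIC precisely when the traces $t_j$ are all equal: if $t_j\equiv t$, these two relations collapse to $\tr(\Pi_j\Pi_k)=\tilde\alpha t\,\delta_{jk}+\tilde\beta t^2$, which is exactly \eref{eq:GenSIC} after rescaling $P_j=d\Pi_j$, the normalization $\sum_j\Pi_j=1$ forcing $t=1/d$; conversely a generalized SIC has constant trace by the discussion following \eref{eq:GenSIC}. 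Hence condition~1 is equivalent to being a generalized SIC, and it remains only to prove that conditions 1--4 are mutually equivalent.

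The forward implications are immediate from the diagonal relation: if $t_j\equiv t$ then $\tr(\Pi_j^2)=\tilde\alpha t+\tilde\beta t^2$ is constant (condition 2), the purity $\tr(\Pi_j^2)/t_j^2$ is constant (condition 3), and the normalized overlap $\tilde\beta t_j t_k/\sqrt{\tr(\Pi_j^2)\tr(\Pi_k^2)}$ is constant (condition 4). For the converses I would exploit the positivity $\tilde\alpha,\tilde\beta>0$ together with $t_j>0$. The map $x\mapsto\tilde\beta x^2+\tilde\alpha x$ is strictly increasing on $x>0$, hence injective, so a constant value of $\tr(\Pi_j^2)$ forces a constant $t_j$, giving $2\Rightarrow1$. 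For $3\Rightarrow1$, constant purity $\wp$ gives $\wp t_j^2=\tilde\alpha t_j+\tilde\beta t_j^2$, i.e.\ $(\wp-\tilde\beta)t_j=\tilde\alpha$; since $\tilde\alpha>0$ this forces $\wp\neq\tilde\beta$ and $t_j=\tilde\alpha/(\wp-\tilde\beta)$, a constant.

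For $4\Rightarrow1$ I would set $r_j:=t_j/\sqrt{\tr(\Pi_j^2)}$, so that equiangularity of $\{\Pi_j\}$ reads $r_j r_k=\gamma$ (a fixed constant) for all $j\neq k$. Choosing any three indices $j,k,l$---available since $d^2\geq 4$---the identities $r_j r_k=r_j r_l$ and $r_k r_j=r_k r_l$ give $r_k=r_l$ and $r_j=r_l$ after dividing by the strictly positive $r_j,r_k$; running this over all triples shows every $r_j$ equals $\sqrt\gamma$, so the purity $t_j^2/\tr(\Pi_j^2)=r_j^2$ is constant and we are back to condition 3.

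I expect the only genuine subtlety to lie in condition 4: one must fix the intended notion of \emph{equiangular} as constancy of the normalized Hilbert--Schmidt overlap, and the combinatorial step---that constant pairwise products of a sequence force the sequence itself to be constant---relies on having at least three outcomes and on the strict positivity of $r_j$ so that division is legitimate. The remaining converses are one-line consequences of the monotonicity and positivity of the diagonal relation, so no serious obstacle arises there.
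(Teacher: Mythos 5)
Your proposal is correct and takes essentially the same route as the paper: the paper's own proof just notes that a generalized SIC obviously satisfies all four conditions and then invokes \lref{lem:TightICinner} to assert that, for a minimal tight IC measurement, the four conditions are equivalent and each one forces the generalized SIC property. Your diagonal/off-diagonal reading of \eref{eq:TightIC2}, the monotonicity and positivity arguments for $2\Rightarrow 1$ and $3\Rightarrow 1$, and the three-index argument for $4\Rightarrow 3$ (together with pinning down the normalized-overlap meaning of ``equiangular'') are exactly the details the paper leaves implicit as consequences of that lemma.
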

For a rank-one minimal IC measurement, the purities of all outcomes are automatically identical. Therefore, it is tight IC if and only if it is SIC~\cite{Scot06, ApplFZ14G}.

\begin{proof}
Obviously,  statements 1, 2, 3, and 4 hold for a generalized SIC. By \lref{lem:TightICinner}, the four statements are equivalent for a tight IC measurement, and any of them guarantees that   $\{\Pi_j\}$ is a generalized SIC.
\end{proof}

\section{\label{sec:LieSignificance}Lie algebraic significance  of generalized SICs}
The connection between SICs and Lie algebra was first studied  by Appleby, Flammia, and Fuchs~\cite{ApplFF11}, who showed that the existence of a SIC in dimension $d$ is equivalent to the existence of a basis for the Lie algebra~$\mathfrak{u}(d)$ such that the structure matrices have a  nice $Q-Q^T$ property. Recently, Appleby, Fuchs, and the author~\cite{ApplFZ14G} generalized the result by proving that the SIC existence is equivalent to the existence of a basis such that each structure matrix is Hermitian and rank $2(d-1)$. The same line of thinking turns out to be useful also for studying generalized SICs.
Here we reveal a cute characterization of  generalized SICs in terms of the structure constants of the Lie algebra.

Given a basis  $L = \{L_j\}$ of Hermitian operators for $\mathfrak{u}(d)$, the  structure constants $C^{L}_{jkl}$ and structure matrices~$C^L_j$  for the basis are defined as
\begin{equation}\label{eq:LieStrucDef}
[L_j,L_k] = \sum_{l} C^{L}_{jkl} L_l,\quad
(C^L_j)_{kl} = C^L_{jkl}.
\end{equation}
Note that the structure constants and structure matrices are pure imaginary.
The structure matrices are Hermitian if and only if the structure constants are completely antisymmetric~\cite{ApplFF11,ApplFZ14G}.

\begin{theorem}
Let $\{\Pi_j\}$ be a minimal IC measurement serving as a basis for the Lie algebra $\mathfrak{u}(d)$. Then the structure constants are completely antisymmetric if and only if $\{\Pi_j\}$ is a generalized SIC.
\end{theorem}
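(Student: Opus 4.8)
The plan is to recast the complete antisymmetry of the structure constants as a commutation condition between the Gram matrix $G_{jk}:=\tr(\Pi_j\Pi_k)$ and the adjoint representation, and then invoke Schur's lemma. The starting observation is that the ``lowered'' structure constant
\begin{equation}
\tilde{C}_{jkl}:=\tr\bigl([\Pi_j,\Pi_k]\Pi_l\bigr)=\sum_m C_{jkl}\,G_{ml}
\end{equation}
is automatically antisymmetric in $k$ and $l$ for each fixed $j$, because $\tr([\Pi_j,\Pi_k]\Pi_l)+\tr([\Pi_j,\Pi_l]\Pi_k)$ collapses to zero by cyclicity of the trace. (In fact $\tilde{C}$ is completely antisymmetric, being cyclically symmetric and antisymmetric in its first two indices.) Writing the structure matrices $(C_j)_{kl}:=C_{jkl}$, this says that $C_jG$ is antisymmetric for \emph{every} measurement, with no extra hypothesis. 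Thus $C_{jkl}$ is nothing but the completely antisymmetric tensor $\tilde{C}_{jkl}$ with its last index raised by $G^{-1}$, and the whole question reduces to whether this raising preserves antisymmetry.

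The hinge is the following equivalence. Since $G$ is real symmetric and invertible (IC forces the $\Pi_j$ to be linearly independent, so $G$ is positive definite), and since $C_jG$ is always antisymmetric, a one-line transpose calculation shows $C_j$ antisymmetric $\Leftrightarrow [G,C_j]=0$: the identity $(C_jG)^{\rmT}=-C_jG$ reads $GC_j^{\rmT}=-C_jG$, and combining this with either $C_j^{\rmT}=-C_j$ or $[G,C_j]=0$ yields the other. Because complete antisymmetry of $C_{jkl}$ is equivalent to antisymmetry of every $C_j$ (the remaining antisymmetry, in the first two indices, is built into the commutator), I obtain that the structure constants are completely antisymmetric if and only if $[G,C_j]=0$ for all $j$. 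As the matrices $C_j$ (equivalently $\ad_{\Pi_j}$) span the image of the adjoint representation of $\mathfrak{u}(d)$, this says precisely that $G$ commutes with the entire adjoint representation.

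The last step is Schur's lemma. Under the adjoint action the space of Hermitian operators splits as the centre $\mathbb{R}\,1$ (trivial) plus the traceless part, which carries the adjoint representation of $\mathfrak{su}(d)$ and is absolutely irreducible; the two components are inequivalent, so the real commutant is two-dimensional, spanned by the projections onto the isotypic components, giving $G=c_0P_0+c_1P_1$. Because $\{\Pi_j\}$ is a measurement, $\sum_j\Pi_j=1$, so in coefficient space the trivial component is the all-ones vector $\vec{u}$; and since $G$ is symmetric, its spectral projections are orthogonal, forcing $P_0=\vec{u}\,\vec{u}^{\rmT}/d^2$. Hence $G=a\,\mathrm{Id}+b\,\vec{u}\,\vec{u}^{\rmT}$ with $a=c_1>0$, i.e.\ $\tr(\Pi_j\Pi_k)=a\delta_{jk}+b$, which after rescaling $P_j=d\Pi_j$ is exactly the generalized SIC condition \eqref{eq:GenSIC}. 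Running the equivalences backwards gives the converse directly: for a generalized SIC, $G^{-1}$ acts as the scalar $1/a$ on the orthogonal complement of $\vec{u}$, and $\tilde{C}_{jk\cdot}$ already lies in that complement (contracting $\tilde{C}$ with $\vec{u}$ gives $\tr([\Pi_j,\Pi_k]\,1)=0$), so $C_{jk\cdot}=(1/a)\tilde{C}_{jk\cdot}$ is completely antisymmetric.

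I expect the main obstacle to lie in the representation-theoretic step rather than in the algebra. One must check that the trivial centre and the adjoint of $\mathfrak{su}(d)$ are absolutely irreducible and inequivalent \emph{over $\mathbb{R}$}, so that Schur pins the commutant down to dimension two, and then use the symmetry of $G$ to upgrade the abstract isotypic projections to the concrete orthogonal projections $\vec{u}\,\vec{u}^{\rmT}/d^2$ and its complement. Once the reformulation through $\tilde{C}$ and the commutation $[G,C_j]=0$ is in hand, everything else is routine.
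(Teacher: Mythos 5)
Your proposal is correct, and it takes a genuinely different route from the paper. The paper's own proof is essentially a two-step reduction to an external result, Lemma~4 of \rcite{ApplFZ14G}, which asserts that complete antisymmetry of the structure constants of a Hermitian basis $\{P_j\}$ is equivalent to $\tr(P_jP_k)=\alpha\delta_{jk}+\zeta\,\tr(P_j)\tr(P_k)$; the only work done in the paper is to sum this relation over $k$ and use $\sum_k P_k=d$ to conclude that all $P_j$ have unit trace, collapsing the second term to a constant. You instead prove the key equivalence from scratch: lowering an index with the Gram matrix $G$ to get the automatically antisymmetric tensor $\tilde{C}_{jkl}=\tr([\Pi_j,\Pi_k]\Pi_l)$, converting antisymmetry of each $C_j$ into the commutation $[G,C_j]=0$, recognizing this as $G$ lying in the commutant of the adjoint representation, and applying Schur's lemma to the two inequivalent, absolutely irreducible components. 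In effect you have inlined a proof of the cited lemma, with the simplification that the measurement condition $\sum_j\Pi_j=1$ is used \emph{inside} the Schur step (it pins the trivial isotypic component in coefficient space to the all-ones vector $\vec{u}$), so the flat form $G_{jk}=a\delta_{jk}+b$ emerges directly rather than through the $\tr(P_j)\tr(P_k)$-weighted form; your converse direction, via the observation that $\tilde{C}_{jk\cdot}\perp\vec{u}$ and $G^{-1}$ acts as a scalar on $\vec{u}^{\perp}$, is likewise cleaner than quoting the lemma in both directions. What your approach buys is self-containedness and transparency about exactly where representation theory enters; what the paper's buys is brevity. Two cosmetic repairs: in your first display the summed index should read $\sum_m C_{jkm}G_{ml}$, and the assertion that the spectral projections of $G$ are orthogonal needs the caveat $c_0\neq c_1$ --- in the degenerate case $c_0=c_1$ the matrix $G$ is a multiple of the identity, so the desired form holds trivially with $b=0$ (and in fact this case cannot occur for a measurement), so nothing breaks.
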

\begin{proof}
If $\{\Pi_j\}$ is a generalized SIC, then  $\{P_j=d\Pi_j\}$ satisfies the equation
\begin{equation}
\tr(P_jP_k)=\alpha\delta_{jk} +\zeta=\alpha\delta_{jk} +\zeta \tr(P_j) \tr(P_k)
\end{equation}
for some positive constants $\alpha$ and $\zeta$. Therefore, the structure constants associated with $\{P_j\}$ are completely antisymmetric according to Lemma 4 in \rcite{ApplFZ14G}, so are the structure constants associated with $\{\Pi_j\}$. If the structure constants are completely antisymmetric, then the same lemma implies that
\begin{equation}
\tr(P_jP_k)=\alpha\delta_{jk} +\zeta \tr(P_j) \tr(P_k)
\end{equation}
for some positive constants $\alpha$ and $\zeta$. Summing over $k$ and applying the identity $\sum_k P_k=d$ yields
\begin{equation}
d\tr(P_j)=\alpha+d^2\zeta \tr(P_j),
\end{equation}
which implies that all $P_j$ have the same trace of 1. So $\tr(P_jP_k)=\alpha\delta_{jk} +\zeta$ and $\{\Pi_j\}$ is a generalized SIC.
\end{proof}

\section{\label{sec:Summary}Summary}
We have identified several  characteristic  merits of  generalized SICs in connection with quantum state tomography and Lie algebra. We showed that among minimal IC measurements  generalized SICs happen to be balanced measurements, measurements whose  tomographic efficiencies are most insensitive to the states under consideration. They are optimal
for   quantum state tomography   given the average purity of measurement outcomes. In addition to establishing the tomographic significance of  generalized SICs, our results are expected to play an important role in studying   tomographic efficiencies of minimal IC measurements in realistic scenarios. In a different vein, when minimal IC measurements are taken as  bases for the Lie algebra of the unitary group, we showed that  generalized SICs are uniquely characterized by the antisymmetry of the associated structure constants.

\bigskip

\section*{Acknowledgements}
It is a pleasure to thank Lin Chen for comments and suggestions.
This work is supported in part by Perimeter Institute for Theoretical Physics. Research at Perimeter Institute is supported by the Government of Canada through Industry Canada and by the Province of Ontario through the Ministry of Research and Innovation.

\bibliographystyle{apsrev}
\bibliography{all_references}

\begin{thebibliography}{30}
\expandafter\ifx\csname natexlab\endcsname\relax\def\natexlab#1{#1}\fi
\expandafter\ifx\csname bibnamefont\endcsname\relax
  \def\bibnamefont#1{#1}\fi
\expandafter\ifx\csname bibfnamefont\endcsname\relax
  \def\bibfnamefont#1{#1}\fi
\expandafter\ifx\csname citenamefont\endcsname\relax
  \def\citenamefont#1{#1}\fi
\expandafter\ifx\csname url\endcsname\relax
  \def\url#1{\texttt{#1}}\fi
\expandafter\ifx\csname urlprefix\endcsname\relax\def\urlprefix{URL }\fi
\providecommand{\bibinfo}[2]{#2}
\providecommand{\eprint}[2][]{\url{#2}}

\bibitem[{\citenamefont{Zauner}(2011)}]{Zaun11}
\bibinfo{author}{\bibfnamefont{G.}~\bibnamefont{Zauner}},
  \bibinfo{journal}{Int. J. Quant. Inf.} \textbf{\bibinfo{volume}{9}},
  \bibinfo{pages}{445} (\bibinfo{year}{2011}).

\bibitem[{\citenamefont{Renes et~al.}(2004)\citenamefont{Renes, Blume-Kohout,
  Scott, and Caves}}]{ReneBSC04}
\bibinfo{author}{\bibfnamefont{J.~M.} \bibnamefont{Renes}},
  \bibinfo{author}{\bibfnamefont{R.}~\bibnamefont{Blume-Kohout}},
  \bibinfo{author}{\bibfnamefont{A.~J.} \bibnamefont{Scott}}, \bibnamefont{and}
  \bibinfo{author}{\bibfnamefont{C.~M.} \bibnamefont{Caves}},
  \bibinfo{journal}{J. Math. Phys.} \textbf{\bibinfo{volume}{45}},
  \bibinfo{pages}{2171} (\bibinfo{year}{2004}), \bibinfo{note}{supplementary
  information including the fiducial kets available at
  \url{http://www.cquic.org/papers/reports/}}.

\bibitem[{\citenamefont{Renes}(2004)}]{Rene04the}
\bibinfo{author}{\bibfnamefont{J.~M.} \bibnamefont{Renes}}, Ph.D. thesis,
  \bibinfo{school}{The University of New Mexico} (\bibinfo{year}{2004}).

\bibitem[{\citenamefont{Appleby}(2005)}]{Appl05}
\bibinfo{author}{\bibfnamefont{D.~M.} \bibnamefont{Appleby}},
  \bibinfo{journal}{J. Math. Phys.} \textbf{\bibinfo{volume}{46}},
  \bibinfo{pages}{052107} (\bibinfo{year}{2005}).

\bibitem[{\citenamefont{Scott and Grassl}(2010)}]{ScotG10}
\bibinfo{author}{\bibfnamefont{A.~J.} \bibnamefont{Scott}} \bibnamefont{and}
  \bibinfo{author}{\bibfnamefont{M.}~\bibnamefont{Grassl}},
  \bibinfo{journal}{J. Math. Phys.} \textbf{\bibinfo{volume}{51}},
  \bibinfo{pages}{042203} (\bibinfo{year}{2010}), \bibinfo{note}{supplementary
  information including the fiducial kets available at
  \url{http://arxiv.org/abs/0910.5784}}.

\bibitem[{\citenamefont{Zhu}(2012)}]{Zhu12the}
\bibinfo{author}{\bibfnamefont{H.}~\bibnamefont{Zhu}}, Ph.D. thesis,
  \bibinfo{school}{National University of Singapore} (\bibinfo{year}{2012}).

\bibitem[{\citenamefont{Zhu}(2010)}]{Zhu10}
\bibinfo{author}{\bibfnamefont{H.}~\bibnamefont{Zhu}}, \bibinfo{journal}{J.
  Phys. A: Math. Theor.} \textbf{\bibinfo{volume}{43}}, \bibinfo{pages}{305305}
  (\bibinfo{year}{2010}).

\bibitem[{\citenamefont{Appleby et~al.}(2014)\citenamefont{Appleby, Fuchs, and
  Zhu}}]{ApplFZ14G}
\bibinfo{author}{\bibfnamefont{D.~M.} \bibnamefont{Appleby}},
  \bibinfo{author}{\bibfnamefont{C.~A.} \bibnamefont{Fuchs}}, \bibnamefont{and}
  \bibinfo{author}{\bibfnamefont{H.}~\bibnamefont{Zhu}},
  \emph{\bibinfo{title}{Group theoretic, {L}ie algebraic and {J}ordan algebraic
  formulations of the {SIC} existence problem}} (\bibinfo{year}{2014}),
  \bibinfo{note}{quantum Inf. Comput. (to be published), arXiv:1312.0555.}

\bibitem[{\citenamefont{Scott}(2006)}]{Scot06}
\bibinfo{author}{\bibfnamefont{A.~J.} \bibnamefont{Scott}},
  \bibinfo{journal}{J. Phys. A: Math. Gen.} \textbf{\bibinfo{volume}{39}},
  \bibinfo{pages}{13507} (\bibinfo{year}{2006}).

\bibitem[{\citenamefont{Zhu and Englert}(2011)}]{ZhuE11}
\bibinfo{author}{\bibfnamefont{H.}~\bibnamefont{Zhu}} \bibnamefont{and}
  \bibinfo{author}{\bibfnamefont{B.-G.} \bibnamefont{Englert}},
  \bibinfo{journal}{Phys. Rev. A} \textbf{\bibinfo{volume}{84}},
  \bibinfo{pages}{022327} (\bibinfo{year}{2011}).

\bibitem[{\citenamefont{Zhu}(2014)}]{Zhu14IOC}
\bibinfo{author}{\bibfnamefont{H.}~\bibnamefont{Zhu}}, \bibinfo{journal}{Phys.
  Rev. A} \textbf{\bibinfo{volume}{90}}, \bibinfo{pages}{012115}
  (\bibinfo{year}{2014}).

\bibitem[{\citenamefont{Fuchs and Schack}(2013)}]{FuchS13}
\bibinfo{author}{\bibfnamefont{C.~A.} \bibnamefont{Fuchs}} \bibnamefont{and}
  \bibinfo{author}{\bibfnamefont{R.}~\bibnamefont{Schack}},
  \bibinfo{journal}{Rev. Mod. Phys.} \textbf{\bibinfo{volume}{85}},
  \bibinfo{pages}{1693} (\bibinfo{year}{2013}).

\bibitem[{\citenamefont{Fuchs}(2010)}]{Fuch10}
\bibinfo{author}{\bibfnamefont{C.~A.} \bibnamefont{Fuchs}},
  \emph{\bibinfo{title}{\uppercase{QB}ism, the perimeter of quantum
  \uppercase{B}ayesianism}} (\bibinfo{year}{2010}), \bibinfo{note}{available at
  \url{http://arxiv.org/abs/1003.5209}}.

\bibitem[{\citenamefont{Appleby
  et~al.}(2011{\natexlab{a}})\citenamefont{Appleby, Ericsson, and
  Fuchs}}]{ApplEF11}
\bibinfo{author}{\bibfnamefont{D.~M.} \bibnamefont{Appleby}},
  \bibinfo{author}{\bibfnamefont{{\r{A}}.}~\bibnamefont{Ericsson}},
  \bibnamefont{and} \bibinfo{author}{\bibfnamefont{C.~A.} \bibnamefont{Fuchs}},
  \bibinfo{journal}{Found. Phys.} \textbf{\bibinfo{volume}{41}},
  \bibinfo{pages}{564} (\bibinfo{year}{2011}{\natexlab{a}}).

\bibitem[{\citenamefont{Ivanovi\'c}(1981)}]{Ivan81}
\bibinfo{author}{\bibfnamefont{I.~D.} \bibnamefont{Ivanovi\'c}},
  \bibinfo{journal}{J. Phys. A: Math. Gen.} \textbf{\bibinfo{volume}{14}},
  \bibinfo{pages}{3241} (\bibinfo{year}{1981}).

\bibitem[{\citenamefont{Wootters and Fields}(1989)}]{WootF89}
\bibinfo{author}{\bibfnamefont{W.~K.} \bibnamefont{Wootters}} \bibnamefont{and}
  \bibinfo{author}{\bibfnamefont{B.~D.} \bibnamefont{Fields}},
  \bibinfo{journal}{Ann. Phys.} \textbf{\bibinfo{volume}{191}},
  \bibinfo{pages}{363} (\bibinfo{year}{1989}).

\bibitem[{\citenamefont{Durt et~al.}(2010)\citenamefont{Durt, Englert,
  Bengtsson, and \.{Z}yczkowski}}]{DurtEBZ10}
\bibinfo{author}{\bibfnamefont{T.}~\bibnamefont{Durt}},
  \bibinfo{author}{\bibfnamefont{B.-G.} \bibnamefont{Englert}},
  \bibinfo{author}{\bibfnamefont{I.}~\bibnamefont{Bengtsson}},
  \bibnamefont{and}
  \bibinfo{author}{\bibfnamefont{K.}~\bibnamefont{\.{Z}yczkowski}},
  \bibinfo{journal}{Int. J. Quant. Inf.} \textbf{\bibinfo{volume}{8}},
  \bibinfo{pages}{535} (\bibinfo{year}{2010}).

\bibitem[{\citenamefont{Wootters}(2006)}]{Woot06}
\bibinfo{author}{\bibfnamefont{W.~K.} \bibnamefont{Wootters}},
  \bibinfo{journal}{Found. Phys.} \textbf{\bibinfo{volume}{36}},
  \bibinfo{pages}{112} (\bibinfo{year}{2006}).

\bibitem[{\citenamefont{Appleby et~al.}(2007)\citenamefont{Appleby, Dang, and
  Fuchs}}]{ApplDF07}
\bibinfo{author}{\bibfnamefont{D.~M.} \bibnamefont{Appleby}},
  \bibinfo{author}{\bibfnamefont{H.~B.} \bibnamefont{Dang}}, \bibnamefont{and}
  \bibinfo{author}{\bibfnamefont{C.~A.} \bibnamefont{Fuchs}},
  \emph{\bibinfo{title}{Symmetric informationally-complete quantum states as
  analogues to orthonormal bases and minimum-uncertainty states}}
  (\bibinfo{year}{2007}), \bibinfo{note}{available at
  \url{http://arxiv.org/abs/0707.2071}}.

\bibitem[{\citenamefont{Appleby et~al.}(2013)\citenamefont{Appleby,
  Yadsan-Appleby, and Zauner}}]{ApplAZ13}
\bibinfo{author}{\bibfnamefont{D.~M.} \bibnamefont{Appleby}},
  \bibinfo{author}{\bibfnamefont{H.}~\bibnamefont{Yadsan-Appleby}},
  \bibnamefont{and} \bibinfo{author}{\bibfnamefont{G.}~\bibnamefont{Zauner}},
  \bibinfo{journal}{Quantum Inf. Comput.} \textbf{\bibinfo{volume}{13}},
  \bibinfo{pages}{0672} (\bibinfo{year}{2013}), \bibinfo{note}{available at
  \url{http://arxiv.org/abs/1209.1813}}.

\bibitem[{\citenamefont{Appleby
  et~al.}(2011{\natexlab{b}})\citenamefont{Appleby, Flammia, and
  Fuchs}}]{ApplFF11}
\bibinfo{author}{\bibfnamefont{D.~M.} \bibnamefont{Appleby}},
  \bibinfo{author}{\bibfnamefont{S.~T.} \bibnamefont{Flammia}},
  \bibnamefont{and} \bibinfo{author}{\bibfnamefont{C.~A.} \bibnamefont{Fuchs}},
  \bibinfo{journal}{J. Math. Phys.} \textbf{\bibinfo{volume}{52}},
  \bibinfo{pages}{022202} (\bibinfo{year}{2011}{\natexlab{b}}).

\bibitem[{\citenamefont{Appleby}(2007)}]{Appl07}
\bibinfo{author}{\bibfnamefont{D.~M.} \bibnamefont{Appleby}},
  \bibinfo{journal}{Optics and Spectroscopy} \textbf{\bibinfo{volume}{103}},
  \bibinfo{pages}{416} (\bibinfo{year}{2007}).

\bibitem[{\citenamefont{Kalev and Gour}(2014)}]{KaleG13}
\bibinfo{author}{\bibfnamefont{A.}~\bibnamefont{Kalev}} \bibnamefont{and}
  \bibinfo{author}{\bibfnamefont{G.}~\bibnamefont{Gour}},
  \emph{\bibinfo{title}{{Construction of all general symmetric informationally
  complete measurements}}} (\bibinfo{year}{2014}),
  \urlprefix\url{http://arxiv.org/abs/1305.6545}.

\bibitem[{\citenamefont{Fisher}(1925)}]{Fish25}
\bibinfo{author}{\bibfnamefont{R.~A.} \bibnamefont{Fisher}},
  \bibinfo{journal}{Math. Proc. Cambr. Philos. Soc.}
  \textbf{\bibinfo{volume}{22}}, \bibinfo{pages}{700} (\bibinfo{year}{1925}).

\bibitem[{\citenamefont{Cram\'er}(1946)}]{Cram46}
\bibinfo{author}{\bibfnamefont{H.}~\bibnamefont{Cram\'er}},
  \emph{\bibinfo{title}{Mathematical Methods of Statistics}}
  (\bibinfo{publisher}{Princeton University Press},
  \bibinfo{address}{Princeton, NJ}, \bibinfo{year}{1946}).

\bibitem[{\citenamefont{Rao}(1945)}]{Rao45}
\bibinfo{author}{\bibfnamefont{C.~R.} \bibnamefont{Rao}},
  \bibinfo{journal}{Bull. Calcutta Math. Soc.} \textbf{\bibinfo{volume}{37}},
  \bibinfo{pages}{81} (\bibinfo{year}{1945}).

\bibitem[{\citenamefont{Roy and Scott}(2007)}]{RoyS07}
\bibinfo{author}{\bibfnamefont{A.}~\bibnamefont{Roy}} \bibnamefont{and}
  \bibinfo{author}{\bibfnamefont{A.~J.} \bibnamefont{Scott}},
  \bibinfo{journal}{J. Math. Phys.} \textbf{\bibinfo{volume}{48}},
  \bibinfo{pages}{072110} (\bibinfo{year}{2007}).

\bibitem[{\citenamefont{{\v{R}}eh\'a{\v{c}}ek and Hradil}(2002)}]{RehaH02}
\bibinfo{author}{\bibfnamefont{J.}~\bibnamefont{{\v{R}}eh\'a{\v{c}}ek}}
  \bibnamefont{and} \bibinfo{author}{\bibfnamefont{Z.}~\bibnamefont{Hradil}},
  \bibinfo{journal}{Phys. Rev. Lett.} \textbf{\bibinfo{volume}{88}},
  \bibinfo{pages}{130401} (\bibinfo{year}{2002}).

\bibitem[{\citenamefont{Embacher and Narnhofer}(2004)}]{EmbaN04}
\bibinfo{author}{\bibfnamefont{F.}~\bibnamefont{Embacher}} \bibnamefont{and}
  \bibinfo{author}{\bibfnamefont{H.}~\bibnamefont{Narnhofer}},
  \bibinfo{journal}{Ann. Phys.} \textbf{\bibinfo{volume}{311}},
  \bibinfo{pages}{220} (\bibinfo{year}{2004}).

\bibitem[{\citenamefont{Brukner and Zeilinger}(1999)}]{BrukZ99}
\bibinfo{author}{\bibfnamefont{{\v{C}}.}~\bibnamefont{Brukner}}
  \bibnamefont{and}
  \bibinfo{author}{\bibfnamefont{A.}~\bibnamefont{Zeilinger}},
  \bibinfo{journal}{Phys. Rev. Lett.} \textbf{\bibinfo{volume}{83}},
  \bibinfo{pages}{3354} (\bibinfo{year}{1999}).

\end{thebibliography}

\end{document}